\documentclass[11pt]{article}

\usepackage[a4paper,margin=30mm]{geometry}
\usepackage{amsmath,amssymb,amsthm,mathtools}
\usepackage{microtype}
\usepackage[hidelinks]{hyperref}

\newtheorem{theorem}{Theorem}[section]

\newcommand{\R}{\mathbb{R}}
\newcommand{\C}{\mathbb{C}}
\newcommand{\F}{\mathcal{F}}
\newcommand{\dd}{\,\mathrm{d}}
\DeclareMathOperator{\Tr}{Tr}
\DeclareMathOperator{\Str}{Str}

\title{Finite Rodr\'iguez-Villegas Approximants\\
to the Riemann $\xi$-Function}
\author{%
  Alisson M. Silva\\[3pt]
  \small Instituto de F\'isica, Universidade de S\~ao Paulo\\
  \small Rua do Mat\~ao, 1371, Cidade Universit\'aria\\
  \small 05508-090 S\~ao Paulo, SP, Brazil\\
  \small \href{mailto:alisson.matheus.silva@usp.br}{alisson.matheus.silva@usp.br}
}
\date{}

\begin{document}
\maketitle

\begin{abstract}
We construct a sequence of finite Rodr\'iguez-Villegas transforms converging
locally uniformly to the Riemann $\xi$-function in the critical strip.  The
input is a positive symmetric profile on the unit interval obtained from the
Riemann theta kernel through convolution with the hyperbolic-secant kernel and
the logistic coordinate.  The profile is a Stieltjes function of $x(1-x)$.
Its Bernstein polynomials produce reciprocal numerators and exact finite
functional equations.  The same numerators admit an exact realization as
fermionic supertraces, while the Bernstein polynomials are normalized Gibbs
traces.
\end{abstract}

\noindent\textbf{Keywords.}
Riemann xi function, Bernstein polynomials, Rodr\'iguez-Villegas transform,
fermionic Fock space, Stieltjes functions.

\section{Introduction}

Rodr\'iguez-Villegas considered a rational function
\begin{equation}
  \frac{U(q)}{(1-q)^d}=\sum_{n\geq0}h_nq^n
  \label{eq:rv-generating-function}
\end{equation}
and the polynomial $H$ for which $h_n=H(n)$ for all sufficiently large $n$.
In commutative algebra, \eqref{eq:rv-generating-function} is the shape of a
Hilbert--Poincar\'e series, while $H$ is its eventual Hilbert polynomial.  If
$U(1)\ne0$ and every zero of $U$ lies on the unit circle,
Rodr\'iguez-Villegas proved that the nontrivial zeros of $H$ lie on a vertical
line \cite{RodriguezVillegas2002}.  In the special case $\deg U=d-1$ and
$U\in\R[q]$, the polynomial $z(s)=H(-s)$ satisfies
$z(1-s)=(-1)^{d-1}z(s)$, and all of its zeros lie on $\Re s=1/2$.  One of the
motivating examples was the Ehrhart polynomial of the $n$-dimensional
cross-polytope.  Bump, Choi, Kurlberg, and Vaaler had established its
functional equation and vertical-line zero theorem by orthogonality
\cite{BumpChoiKurlbergVaaler2000}.  The Rodr\'iguez-Villegas theorem recovers
the same phenomenon from the numerator $(1+q)^n$.  The final section of
\cite{RodriguezVillegas2002} proposed a possible
infinite-dimensional interpretation behind this finite phenomenon.  In
concrete terms, it asked whether $\zeta(-s)$ could be understood as the Hilbert
function of a natural infinite-dimensional, possibly Gorenstein, graded
algebra, and what its Poincar\'e series should be.  This proposal did not
construct such an algebra or an associated limiting procedure.

Subsequent work supplied arithmetic instances of this finite mechanism.
Conrey, Farmer, and {\"O}.~{\.{I}}mamo\u{g}lu proved a unit-circle theorem for
odd period polynomials \cite{ConreyFarmerImamoglu2013}, and the corresponding
theorem for full period polynomials of newforms was established by Jin, Ma,
Ono, and Soundararajan \cite{JinMaOnoSoundararajan2016}.  Manin applied the
Rodr\'iguez-Villegas transform to period polynomials of Hecke eigenforms and
obtained polynomials with a zeta-type functional equation and critical-line
zeros \cite{Manin2016}.  Ono, Rolen, and Sprung then constructed
zeta-polynomials encoding critical values of modular $L$-functions
\cite{OnoRolenSprung2017}.  Jameson clarified their Hilbert-polynomial and
Eichler--Shimura structure \cite{Jameson2019}, while L\"obrich, Ma, and
Thorner developed motivic analogues for symmetric-power $L$-functions
\cite{LobrichMaThorner2017}.  This literature realizes the finite
unit-circle-to-critical-line principle with substantial arithmetic content.

The analytic target in the present note is the completed Riemann zeta
function
\begin{equation}
  \xi(s)=\frac12s(s-1)\pi^{-s/2}\Gamma(s/2)\zeta(s),
  \qquad
  \Xi(t)=\xi\!\left(\frac12+it\right).
  \label{eq:xi-definition}
\end{equation}
We approach the Rodr\'iguez-Villegas proposal in the reverse direction:
starting from the theta representation of $\xi$, we extract a positive
symmetric profile $P_\xi$ on $[0,1]$ and approximate it by Bernstein
polynomials.  Their samples define reciprocal finite numerators.  The
associated Rodr\'iguez-Villegas transforms satisfy exact functional equations
and converge locally uniformly to $\xi$ in the critical strip.  The same
finite numerators arise as supertraces on exterior algebras, giving the
approximation a canonical fermionic formulation.  In this way the original
Hilbert-series mechanism is retained while its input is reconstructed from
the analytic theta kernel.

\section{The theta--logistic profile}

Use the standard positive even Riemann theta kernel $\Phi$, normalized by
\begin{equation}
  \Xi(t)=\int_{\R}\Phi(u)e^{itu}\dd u.
  \label{eq:riemann-fourier}
\end{equation}
The classical theta representation and its decay properties may be found, for
example, in \cite{Titchmarsh1986}.  Put
\begin{equation}
  k(u)=\frac{1}{2\cosh(u/2)},
  \qquad
  \widehat{k}(t)=\frac{\pi}{\cosh(\pi t)},
  \qquad
  w_\xi=\Phi*k.
  \label{eq:sech-kernel}
\end{equation}
For
\begin{equation}
  x=\frac{e^u}{1+e^u},
  \qquad
  u=\log\frac{x}{1-x},
  \label{eq:logistic-coordinate}
\end{equation}
define
\begin{equation}
  P_\xi(x)=
  \frac{w_\xi\!\left(\log(x/(1-x))\right)}{\sqrt{x(1-x)}}.
  \label{eq:profile-definition}
\end{equation}
We use the standard terminology for Stieltjes and completely monotone
functions as in \cite{SchillingSongVondracek2012}.

\begin{theorem}\label{thm:profile}
The function $P_\xi$ extends continuously to $[0,1]$, is positive and
symmetric, and satisfies
\begin{equation}
  \xi(s)=\frac{\sin(\pi s)}{\pi}
  \int_0^1 P_\xi(x)x^{s-1}(1-x)^{-s}\dd x,
  \qquad 0<\Re s<1.
  \label{eq:continuum-rv}
\end{equation}
It has the representation
\begin{equation}
  P_\xi(x)=\int_{\R}\Phi(v)
  \frac{e^{v/2}}{x+(1-x)e^v}\dd v.
  \label{eq:theta-profile}
\end{equation}
Moreover, writing $p=x(1-x)$,
\begin{equation}
  P_\xi(x)=G(p),
  \qquad
  G(p)=\int_0^\infty\frac{\rho(\dd v)}{1+p\lambda(v)},
  \label{eq:stieltjes-profile}
\end{equation}
where
\begin{equation}
  \rho(\dd v)=2\Phi(v)\cosh(v/2)\dd v,
  \qquad
  \lambda(v)=4\sinh^2(v/2).
  \label{eq:rho-lambda}
\end{equation}
Consequently, $G$ is completely monotone on $[0,\infty)$:
\begin{equation}
  (-1)^nG^{(n)}(p)
  =n!\int_0^\infty
  \frac{\lambda(v)^n\rho(\dd v)}{(1+p\lambda(v))^{n+1}}>0.
  \label{eq:complete-monotonicity}
\end{equation}
\end{theorem}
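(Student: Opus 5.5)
The plan is to derive everything from the explicit integral representation \eqref{eq:theta-profile}, obtained by writing out the convolution. By definition,
\[
w_\xi(u)=\int_{\R}\Phi(v)\,\frac{1}{2\cosh((u-v)/2)}\dd v .
\]
With $x=e^u/(1+e^u)$ we have $e^{u/2}=\sqrt{x/(1-x)}$, so
\[
2\cosh\!\big((u-v)/2\big)=\frac{x e^{-v/2}+(1-x)e^{v/2}}{\sqrt{x(1-x)}} .
\]
Dividing by $\sqrt{x(1-x)}$ then gives
\[
P_\xi(x)=\int_{\R}\Phi(v)\,\frac{1}{x e^{-v/2}+(1-x)e^{v/2}}\dd v
=\int_{\R}\Phi(v)\,\frac{e^{v/2}}{x+(1-x)e^v}\dd v,
\]
which is \eqref{eq:theta-profile} on $(0,1)$.

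\emph{Continuity, positivity, symmetry.} At $x=0$ the integrand becomes $\Phi(v)e^{v/2}$, and at $x=1$ it becomes $\Phi(v)e^{-v/2}$. Both are integrable because $\Phi$ decays doubly exponentially. The integrand is dominated uniformly in $x\in[0,1]$ by $\Phi(v)\big(e^{v/2}+e^{-v/2}\big)$, since the denominator is at least $\min(e^{-v/2},e^{v/2})$. Dominated convergence therefore gives a continuous extension to $[0,1]$. Positivity is immediate from $\Phi>0$. Symmetry $x\mapsto1-x$ follows from the substitution $v\mapsto-v$, using that $\Phi$ is even.

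\emph{Stieltjes form.} Symmetrize in $v$ using evenness of $\Phi$, so that $P_\xi(x)=\int_0^\infty\Phi(v)\big[D_+^{-1}+D_-^{-1}\big]\dd v$, where $D_\pm=xe^{\mp v/2}+(1-x)e^{\pm v/2}$. We have $D_++D_-=2\cosh(v/2)$. For the product,
\[
D_+D_-=x^2+(1-x)^2+x(1-x)(e^v+e^{-v})=1+x(1-x)\big(e^v+e^{-v}-2\big)=1+p\lambda(v).
\]
Hence the bracket equals $2\cosh(v/2)/(1+p\lambda(v))$, which gives \eqref{eq:stieltjes-profile} with $\rho$ as in \eqref{eq:rho-lambda}. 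Complete monotonicity \eqref{eq:complete-monotonicity} follows by differentiating under the integral. This is justified for $p\ge0$ because $\lambda^n/(1+p\lambda)^{n+1}\le\lambda^n$, and $\lambda^n\rho$ is integrable by the decay of $\Phi$. Strict positivity holds since $\rho$ has positive density and $\lambda>0$ for $v>0$.

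\emph{The integral formula \eqref{eq:continuum-rv}.} Substitute $x=e^u/(1+e^u)$, so that $\dd x=x(1-x)\dd u$. Then
\[
P_\xi(x)\,x^{s-1}(1-x)^{-s}\dd x=w_\xi(u)\,\big(x/(1-x)\big)^{s-1/2}\dd u=w_\xi(u)\,e^{(s-1/2)u}\dd u .
\]
The right side of \eqref{eq:continuum-rv} thus becomes $\frac{\sin\pi s}{\pi}\int_{\R}w_\xi(u)e^{(s-1/2)u}\dd u$. Since $w_\xi=\Phi*k$, this bilateral Laplace transform factors as a product of the transforms of $\Phi$ and of $k$ at the same argument. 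Write $s=\tfrac12+it$ with $|\Im t|<\tfrac12$. The transform of $\Phi$ is $\Xi(t)=\xi(s)$ by \eqref{eq:riemann-fourier}, extended analytically, which is legitimate because $\Phi$ decays faster than any exponential. The transform of $k$ is $\widehat k(t)=\pi/\cosh(\pi t)$. Finally, $\cosh(\pi t)=\cos(\pi(s-\tfrac12))=\sin(\pi s)$, so the prefactor cancels exactly and yields $\xi(s)$.

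The main obstacle is justifying this step. The Laplace integral of $k$ converges only for $|\Re(s-\tfrac12)|<\tfrac12$, i.e.\ $0<\Re s<1$. Absolute convergence of the double integral must be checked there: $k(u)\le e^{-|u|/2}$, and $\Phi$ decays fast enough that Fubini applies on the open strip. The identity $\int k(u)e^{itu}\dd u=\pi/\cosh(\pi t)$ must then be continued from real $t$ to the strip, which holds by analyticity of both sides.
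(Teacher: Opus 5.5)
Your proof is correct and follows the paper's route: the logistic substitution turns the integral into the bilateral Laplace transform of $w_\xi=\Phi*k$, which factors; writing out the convolution and dividing by $k(u)=\sqrt{x(1-x)}$ gives the theta form; pairing $v$ with $-v$ gives the Stieltjes form; and differentiating under the integral gives complete monotonicity. The differences are minor: you get $\int_{\R}k(u)e^{(s-1/2)u}\dd u=\pi/\sin(\pi s)$ by analytic continuation of $\widehat k$ rather than via $y=e^u$ and the beta integral, and you give explicit dominating functions the paper leaves implicit (one small slip: the endpoint integrands are $\Phi(v)e^{-v/2}$ at $x=0$ and $\Phi(v)e^{v/2}$ at $x=1$, the reverse of what you wrote, which is harmless since $\Phi$ is even).
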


\begin{proof}
For $0<\Re s<1$, the substitution $y=e^u$ gives
\begin{equation}
  \int_{\R}k(u)e^{(s-1/2)u}\dd u
  =\int_0^\infty\frac{y^{s-1}}{1+y}\dd y
  =\frac{\pi}{\sin(\pi s)}.
  \label{eq:k-laplace}
\end{equation}
The theta kernel decays rapidly enough to justify convolution and Fubini in
this strip.  The bilateral theta representation of $\xi$ and
\eqref{eq:k-laplace} therefore imply
\begin{equation}
  \int_{\R}w_\xi(u)e^{(s-1/2)u}\dd u
  =\frac{\pi}{\sin(\pi s)}\xi(s).
\end{equation}
Since $k(u)=\sqrt{x(1-x)}$ and
$\dd u=\dd x/[x(1-x)]$, the logistic substitution yields
\eqref{eq:continuum-rv}.

Next, write $w_\xi(u)=\int_{\R}\Phi(v)k(u-v)\dd v$ and divide by
$k(u)$.  Direct simplification in the variable $x=e^u/(1+e^u)$ gives
\eqref{eq:theta-profile}.  Pairing the contributions of $v$ and $-v$ and
using the evenness of $\Phi$ gives
\begin{equation}
  P_\xi(x)=\int_0^\infty
  \frac{2\Phi(v)\cosh(v/2)}
       {1+4x(1-x)\sinh^2(v/2)}\dd v,
\end{equation}
which is \eqref{eq:stieltjes-profile}.  Positivity, symmetry, continuity at
the endpoints, and \eqref{eq:complete-monotonicity} follow from this formula
and differentiation under the integral sign.
\end{proof}

\section{Finite Rodr\'iguez-Villegas approximants}

For an even positive integer $E$, let
\begin{equation}
  B_EP_\xi(x)=\sum_{j=0}^E
  P_\xi(j/E)\binom{E}{j}x^j(1-x)^{E-j}
  \label{eq:bernstein-polynomial}
\end{equation}
be the Bernstein polynomial of $P_\xi$, in the notation of
\cite{Lorentz1986}.  Define
\begin{equation}
  U_E(q)=\sum_{j=0}^E u_{E,j}q^j,
  \qquad
  u_{E,j}=(-1)^j\binom{E}{j}P_\xi(j/E),
  \label{eq:numerator}
\end{equation}
and its finite Rodr\'iguez-Villegas transform by
\begin{equation}
  Z_E(s)=\sum_{j=0}^E u_{E,j}\binom{E-j-s}{E}.
  \label{eq:finite-rv-transform}
\end{equation}
The generalized binomial coefficient is understood as a polynomial in $s$.
Equivalently, $Z_E(s)=H_E(-s)$, where $H_E$ is the polynomial eventually
represented by the coefficients of
$U_E(q)/(1-q)^{E+1}$.

\begin{theorem}\label{thm:finite-approximants}
For every even $E$,
\begin{equation}
  U_E(q)=(1-q)^E
  B_EP_\xi\!\left(-\frac{q}{1-q}\right)
  \label{eq:mobius-bernstein}
\end{equation}
and
\begin{equation}
  U_E(q)=q^EU_E(q^{-1}).
  \label{eq:reciprocity}
\end{equation}
For $0<\Re s<1$ one has the exact integral formula
\begin{equation}
  Z_E(s)=\frac{\sin(\pi s)}{\pi}
  \int_0^1 B_EP_\xi(x)x^{s-1}(1-x)^{-s}\dd x.
  \label{eq:finite-integral}
\end{equation}
Consequently,
\begin{equation}
  Z_E(1-s)=Z_E(s)
  \label{eq:finite-functional-equation}
\end{equation}
and
\begin{equation}
  Z_E(s)\longrightarrow\xi(s)
  \label{eq:local-uniform-limit}
\end{equation}
locally uniformly in the strip $0<\Re s<1$.
\end{theorem}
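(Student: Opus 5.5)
The plan is to verify each claim separately, starting from the definitions \eqref{eq:numerator} and \eqref{eq:bernstein-polynomial}.

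\textbf{Numerator identity and reciprocity.} For \eqref{eq:mobius-bernstein}, substitute $x=-q/(1-q)$, so that $1-x=1/(1-q)$. Then
\[
x^j(1-x)^{E-j}=(-1)^jq^j(1-q)^{-E}.
\]
Multiplying by $(1-q)^E$ gives exactly $\sum_j(-1)^j\binom{E}{j}P_\xi(j/E)q^j=U_E(q)$. Reciprocity \eqref{eq:reciprocity} follows from two facts: the symmetry $P_\xi(j/E)=P_\xi(1-j/E)$ from Theorem~\ref{thm:profile}, and $(-1)^{E-j}=(-1)^j$ because $E$ is even. Together they give $u_{E,E-j}=u_{E,j}$.

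\textbf{Integral formula.} I will prove \eqref{eq:finite-integral} termwise, by showing that for each $j$
\[
\frac{\sin\pi s}{\pi}\binom{E}{j}\int_0^1x^{j+s-1}(1-x)^{E-j-s}\dd x=(-1)^j\binom{E}{j}\binom{E-j-s}{E}.
\]
The left side is a Beta integral. Since $\Gamma(s+j)\Gamma(1-s+E-j)/\Gamma(E+1)$ and $\Gamma(s)\Gamma(1-s)=\pi/\sin\pi s$, it equals
\[
\binom{E}{j}\frac{(s)_j(1-s)_{E-j}}{E!}.
\]
For the right side I will expand $\binom{E-j-s}{E}=\frac{1}{E!}\prod_{i=0}^{E-1}(E-j-s-i)$. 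The factors with $i<E-j$ give $(1-s)_{E-j}$. The remaining $j$ factors are $-s,-s-1,\dots,-s-j+1$, which give $(-1)^j(s)_j$. So the two sides agree. Convergence for $0<\Re s<1$ is clear. Summing over $j$ yields \eqref{eq:finite-integral}.

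\textbf{Functional equation.} Under $s\mapsto1-s$ together with $x\mapsto1-x$, the weight $x^{s-1}(1-x)^{-s}$ is preserved and $\sin\pi s$ is invariant. Moreover $B_EP_\xi$ is symmetric, because $P_\xi$ is. This gives \eqref{eq:finite-functional-equation} on the strip, and since $Z_E$ is a polynomial the identity extends to all $s$. Alternatively, it follows directly from reciprocity and the RV mechanism.

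\textbf{Convergence.} By Theorem~\ref{thm:profile}, $P_\xi$ is continuous on $[0,1]$, so $B_EP_\xi\to P_\xi$ uniformly on $[0,1]$. Subtract \eqref{eq:continuum-rv} from \eqref{eq:finite-integral}. For $s$ in a compact subset $K$ of the strip, with $\sigma=\Re s\in[a,b]\subset(0,1)$, this gives
\[
|Z_E(s)-\xi(s)|\le\frac{|\sin\pi s|}{\pi}\,\|B_EP_\xi-P_\xi\|_\infty\int_0^1x^{\sigma-1}(1-x)^{-\sigma}\dd x.
\]
The last integral equals $\pi/\sin\pi\sigma$, which is bounded on $[a,b]$, and $|\sin\pi s|$ is bounded on $K$. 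Hence the convergence is uniform on $K$.

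The only delicate point is the sign and Pochhammer bookkeeping in the termwise Beta identity. The rest is routine.
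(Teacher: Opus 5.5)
Your proposal is correct and follows the paper's proof step for step. The steps match: the M\"obius substitution, reciprocity from $P_\xi(j/E)=P_\xi(1-j/E)$ together with the evenness of $E$, termwise Beta integrals combined with the reflection formula, the substitution $x\mapsto 1-x$ for the functional equation, and the Bernstein theorem with a weighted sup-norm bound; your explicit Pochhammer bookkeeping and exact evaluation $\int_0^1 x^{\sigma-1}(1-x)^{-\sigma}\dd x=\pi/\sin\pi\sigma$ simply spell out what the paper states more tersely.
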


\begin{proof}
The M\"obius identity \eqref{eq:mobius-bernstein} follows by substituting
$x=-q/(1-q)$ in \eqref{eq:bernstein-polynomial}.  Since
$P_\xi(j/E)=P_\xi(1-j/E)$ and $E$ is even, the coefficients in
\eqref{eq:numerator} satisfy $u_{E,j}=u_{E,E-j}$, proving
\eqref{eq:reciprocity}.

Expanding the Bernstein polynomial and applying the beta integral gives
\begin{align}
  &\frac{\sin(\pi s)}{\pi}
  \int_0^1 B_EP_\xi(x)x^{s-1}(1-x)^{-s}\dd x \notag\\
  &\quad=
  \sum_{j=0}^E\binom{E}{j}P_\xi(j/E)
  \frac{\sin(\pi s)}{\pi}
  \frac{\Gamma(s+j)\Gamma(E-j+1-s)}{\Gamma(E+1)}.
  \label{eq:beta-expansion}
\end{align}
The reflection formula transforms the $j$th summand into
$u_{E,j}\binom{E-j-s}{E}$, which proves \eqref{eq:finite-integral}.
The symmetry $B_EP_\xi(1-x)=B_EP_\xi(x)$ and the substitution
$x\mapsto1-x$ then yield \eqref{eq:finite-functional-equation}.

Finally, the Bernstein theorem gives
$\|B_EP_\xi-P_\xi\|_\infty\to0$.  If $K$ is compact in the critical strip,
there is a $\delta>0$ such that
$\delta\le\Re s\le1-\delta$ on $K$.  Equations
\eqref{eq:continuum-rv} and \eqref{eq:finite-integral} imply
\begin{equation}
  \sup_{s\in K}|Z_E(s)-\xi(s)|
  \le C_K\|B_EP_\xi-P_\xi\|_\infty
  \int_0^1x^{\delta-1}(1-x)^{\delta-1}\dd x,
\end{equation}
where $C_K<\infty$.  This proves local uniform convergence.
\end{proof}

\section{Fermionic realization}

Let $\F_E=\Lambda^\bullet\C^E$, let $N$ be the fermion-number operator,
and define $\Str(A)=\Tr((-1)^N A)$.

\begin{theorem}\label{thm:fermionic}
The numerator \eqref{eq:numerator} satisfies the exact identity
\begin{equation}
  U_E(q)=\Str_{\F_E}\!\left(P_\xi(N/E)q^N\right).
  \label{eq:supertrace}
\end{equation}
If $u\in\R$ and $x=e^u/(1+e^u)$, then
\begin{equation}
  B_EP_\xi(x)=
  \frac{\Tr_{\F_E}\!\left(P_\xi(N/E)e^{uN}\right)}
       {\Tr_{\F_E}(e^{uN})}
  =\mathbb{E}_x\!\left[P_\xi(N/E)\right],
  \label{eq:gibbs-trace}
\end{equation}
where $N$ has the binomial distribution $\operatorname{Binomial}(E,x)$.
\end{theorem}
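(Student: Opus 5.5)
The plan is to diagonalize everything in the standard occupation-number basis of $\F_E=\Lambda^\bullet\C^E$ and reduce both identities to counting. Fix a basis $e_1,\dots,e_E$ of $\C^E$. The wedge monomials $e_{i_1}\wedge\cdots\wedge e_{i_j}$ with $i_1<\cdots<i_j$ form a basis of $\F_E$. Each is an eigenvector of $N$ with eigenvalue $j$. Hence $\Lambda^j\C^E$ is the $N=j$ eigenspace, and it has dimension $\binom{E}{j}$. Any operator of the form $f(N)$, defined by functional calculus, is therefore diagonal in this basis and acts on $\Lambda^j\C^E$ as $f(j)$. In particular $P_\xi(N/E)q^N$ acts as $P_\xi(j/E)q^j$, and $(-1)^N$ acts as $(-1)^j$.

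For \eqref{eq:supertrace}, take the trace of $(-1)^N P_\xi(N/E)q^N$ degree by degree. This gives
\[
\Str_{\F_E}\bigl(P_\xi(N/E)q^N\bigr)=\sum_{j=0}^E(-1)^j\binom{E}{j}P_\xi(j/E)q^j .
\]
By \eqref{eq:numerator}, the right-hand side is exactly $U_E(q)$. Only the fact that $P_\xi(j/E)$ is finite at $j=0$ and $j=E$ is needed here, and this is guaranteed by the continuous extension to $[0,1]$ in Theorem~\ref{thm:profile}.

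For \eqref{eq:gibbs-trace}, the same diagonal computation gives the denominator
\[
\Tr_{\F_E}(e^{uN})=\sum_j\binom{E}{j}e^{uj}=(1+e^u)^E
\]
and the numerator
\[
\sum_j\binom{E}{j}P_\xi(j/E)e^{uj}.
\]
Divide the $j$th term by $(1+e^u)^E$. The key step is to rewrite
\[
\frac{e^{uj}}{(1+e^u)^E}=\Bigl(\frac{e^u}{1+e^u}\Bigr)^j\Bigl(\frac{1}{1+e^u}\Bigr)^{E-j}=x^j(1-x)^{E-j},
\]
using $x=e^u/(1+e^u)$ and $1-x=1/(1+e^u)$ from \eqref{eq:logistic-coordinate}. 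The ratio then becomes $\sum_j\binom{E}{j}P_\xi(j/E)x^j(1-x)^{E-j}$, which is $B_EP_\xi(x)$ by \eqref{eq:bernstein-polynomial}.

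The last equality reads the Gibbs weights probabilistically. Under the state $e^{uN}/\Tr(e^{uN})$, the law of $N$ assigns mass $\binom{E}{j}x^j(1-x)^{E-j}$ to $j$, which is $\operatorname{Binomial}(E,x)$. The normalized trace of $P_\xi(N/E)$ is then $\mathbb{E}_x[P_\xi(N/E)]$.

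There is no genuine obstacle; the argument is bookkeeping. The step that needs care is the sign and normalization convention: $(-1)^N$ produces exactly the alternating signs in $u_{E,j}$, and the logistic change of variables converts $e^{uN}$ into the binomial weights with the correct $x$ rather than $1-x$.
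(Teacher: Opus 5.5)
Your proof is correct and follows the paper's argument: both diagonalize $N$ on the sectors $\Lambda^j\C^E$ of dimension $\binom{E}{j}$ and read off \eqref{eq:supertrace} sector by sector. The only difference is cosmetic. The paper obtains \eqref{eq:gibbs-trace} by specializing the supertrace at $q=-e^u$, where the parity grading cancels, and then invoking the M\"obius identity \eqref{eq:mobius-bernstein}, whereas you perform the equivalent rewriting $e^{uj}/(1+e^u)^E=x^j(1-x)^{E-j}$ directly.
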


\begin{proof}
The $j$-particle sector of $\F_E$ has dimension $\binom{E}{j}$, and
$(-1)^N$ acts on it by $(-1)^j$.  Taking the supertrace sector by sector gives
\eqref{eq:supertrace}.  At $q=-e^u$ the parity in $q^N$ cancels the parity in
the supertrace, so
\begin{equation}
  U_E(-e^u)=\Tr_{\F_E}\!\left(P_\xi(N/E)e^{uN}\right).
\end{equation}
Since $\Tr_{\F_E}(e^{uN})=(1+e^u)^E$, division by this partition function
and the identity
$U_E(-e^u)/(1+e^u)^E=B_EP_\xi(x)$ prove \eqref{eq:gibbs-trace}.
\end{proof}

Under the standard identification of $\F_E$ with a tensor product of one-mode
spaces, the normalized operator factorizes as
\begin{equation*}
  \frac{e^{uN}}{(1+e^u)^E}=\varrho_x^{\otimes E},
  \qquad
  \varrho_x=(1-x)\lvert0\rangle\langle0\rvert
             +x\lvert1\rangle\langle1\rvert.
\end{equation*}
Thus the occupation variables are independent Bernoulli variables, with
\begin{equation*}
  \mathbb{E}_x(N/E)=x,
  \qquad
  \operatorname{Var}_x(N/E)=\frac{x(1-x)}{E}.
\end{equation*}

The reciprocity of $U_E$ also has a direct Fock-space interpretation.  Fix an
occupation basis and let $C_E$ be the particle--hole involution that sends the
basis vector indexed by a subset of $\{1,\ldots,E\}$ to the vector indexed by
its complement.  Then
\begin{equation*}
  C_ENC_E^{-1}=E-N,
  \qquad
  C_EP_\xi(N/E)C_E^{-1}=P_\xi(N/E).
\end{equation*}
Because $E$ is even, $C_E$ also preserves fermion parity.  Conjugating the
supertrace in \eqref{eq:supertrace} by $C_E$ therefore gives
\begin{equation*}
  U_E(q)=q^E U_E(q^{-1}),
\end{equation*}
so the algebraic reciprocity used by the finite transform is exactly the
particle--hole symmetry of the occupation model.

\section{Concluding remarks and open questions}

The construction supplies a concrete finite-to-continuum statement:
the polynomial transforms $Z_E$ have the same functional symmetry as $\xi$
and converge to it locally uniformly in the open critical strip.  The
mechanism differs from the usual forward use of the Rodr\'iguez-Villegas
theorem.  Here reciprocity of $U_E$ follows from the symmetry of the sampled
theta profile, but the zeros of $U_E$ are not known to lie on the unit circle.
Accordingly, local uniform convergence does not imply that every zero of
$Z_E$ lies on the critical line, and it does not control zeros that move out
of compact subsets as $E$ grows.

The trace formulas give the construction a precise finite physical
realization.  The space $\F_E=\Lambda^\bullet\C^E$ is the $2^E$-dimensional
Fock space of $E$ independent fermionic modes, and $N$ is their total
occupation number.  For real $u$, the density matrix
\begin{equation}
  \varrho_{E,u}=\frac{e^{uN}}{(1+e^u)^E}
  \label{eq:gibbs-state}
\end{equation}
is the grand-canonical Gibbs state of these modes.  Here $u$ is a
dimensionless chemical potential and
$x=e^u/(1+e^u)$ is the one-mode Fermi--Dirac occupation probability.
Equation \eqref{eq:gibbs-trace} says that the Bernstein polynomial is exactly
the thermal expectation $\Tr(\varrho_{E,u}P_\xi(N/E))$.  Thus the limit
$E\to\infty$ is simultaneously the Bernstein limit and the thermodynamic
concentration $N/E\to x$, with fluctuations of order $E^{-1/2}$.  The
reciprocity inherited from $P_\xi(x)=P_\xi(1-x)$ may be read as particle--hole
symmetry, while the numerator $U_E$ is the corresponding parity-graded trace.
The passage from $U_E(-e^u)$ to the ordinary Gibbs trace is precisely the
cancellation of this parity grading by the sign in the fugacity.  These are
standard finite-Fock-space constructions in the sense of second quantization
\cite{Berezin1966}, but here they package the theta data in an unusually rigid
polynomial form.

The Gibbs formulation points naturally toward an operator-theoretic
extension.  The theta kernel presently specifies the observable
$P_\xi(N/E)$.  It would be valuable to characterize the same observable by an
intrinsic Hamiltonian or variational principle and to determine whether the
occupation-number system embeds into a larger operator model associated with
$\xi$.  Such a construction could connect the finite thermodynamic limit
above with spectral and trace-formula structures.

A parallel geometric question is whether a deformation preserving
particle--hole symmetry also preserves the finite-to-continuum
Rodr\'iguez-Villegas limit, or whether the spaces $\Lambda^\bullet\C^E$ occur
as canonical finite compressions of the scaling or adele-class-space
frameworks of \cite{Connes1999,ConnesConsani2019}.  These questions identify
the additional structure needed to develop the exact statistical
representation into an intrinsic dynamical model.


\begin{thebibliography}{99}

\bibitem{Berezin1966}
F.~A. Berezin,
\emph{The Method of Second Quantization},
Academic Press, New York, 1966.

\bibitem{BumpChoiKurlbergVaaler2000}
D.~Bump, K.-K.~Choi, P.~Kurlberg, and J.~Vaaler,
\emph{A local Riemann hypothesis, I},
Math. Z. \textbf{233} (2000), 1--19.

\bibitem{Connes1999}
A.~Connes,
\emph{Trace formula in noncommutative geometry and the zeros of the Riemann
zeta function},
Selecta Math. (N.S.) \textbf{5} (1999), 29--106.

\bibitem{ConnesConsani2019}
A.~Connes and C.~Consani,
\emph{The Scaling Hamiltonian},
arXiv:1910.14368, 2019.

\bibitem{ConreyFarmerImamoglu2013}
J.~B. Conrey, D.~W. Farmer, and \"O.~\.{I}mamo\u{g}lu,
\emph{The nontrivial zeros of period polynomials of modular forms lie on the
unit circle},
Int. Math. Res. Not. IMRN (2013), no.~20, 4758--4771.

\bibitem{Jameson2019}
M.~Jameson,
\emph{Zeta-polynomials, Hilbert polynomials, and the Eichler--Shimura
identities},
Res. Math. Sci. \textbf{6} (2019), article 27.

\bibitem{JinMaOnoSoundararajan2016}
S.~Jin, W.~Ma, K.~Ono, and K.~Soundararajan,
\emph{The Riemann hypothesis for period polynomials of modular forms},
Proc. Natl. Acad. Sci. USA \textbf{113} (2016), no.~10, 2603--2608.

\bibitem{Lorentz1986}
G.~G. Lorentz,
\emph{Bernstein Polynomials}, 2nd ed., Chelsea Publishing Company,
New York, 1986.

\bibitem{LobrichMaThorner2017}
S.~L\"obrich, W.~Ma, and J.~Thorner,
\emph{Special values of motivic $L$-functions and zeta-polynomials for
symmetric powers of elliptic curves},
Res. Math. Sci. \textbf{4} (2017), article 26.

\bibitem{Manin2016}
Yu.~I. Manin,
\emph{Local zeta factors and geometries under $\operatorname{Spec}\,\mathbb{Z}$},
Izv. Math. \textbf{80} (2016), no.~4, 751--758.

\bibitem{OnoRolenSprung2017}
K.~Ono, L.~Rolen, and F.~Sprung,
\emph{Zeta-polynomials for modular form periods},
Adv. Math. \textbf{306} (2017), 328--343.

\bibitem{RodriguezVillegas2002}
F.~Rodr\'iguez-Villegas,
\emph{On the zeros of certain polynomials},
Proc. Amer. Math. Soc. \textbf{130} (2002), 2251--2254.

\bibitem{SchillingSongVondracek2012}
R.~L. Schilling, R.~Song, and Z.~Vondra\v cek,
\emph{Bernstein Functions: Theory and Applications}, 2nd ed.,
De Gruyter Studies in Mathematics, vol.~37, De Gruyter, Berlin, 2012.

\bibitem{Titchmarsh1986}
E.~C. Titchmarsh,
\emph{The Theory of the Riemann Zeta-Function}, 2nd ed., revised by
D.~R. Heath-Brown, Clarendon Press, Oxford, 1986.

\end{thebibliography}
\end{document}